\newcounter{subeqn} %
\pgfplotsset{
  grid style = {
    dash pattern = on 0.025mm off 0.95mm on 0.025mm off 0mm, 
    line cap = round,
    black,
    line width = 0.5pt
  },
  tick label style={font=\small},
  label style={font=\small},
  legend style={font=\footnotesize},
}
\newacronym{ADMM}{ADMM}{Alternating Direction Method of Multipliers}
\newacronym{APC}{APC}{area power consumption}
\newacronym{ASE}{ASE}{area spectral efficiency}
\newacronym{BS}{BS}{base station}
\newacronym{CDN}{CDN}{content delivery network}
\newacronym{CN}{CN}{core network}
\newacronym{ICN}{ICN}{information-centric network}
\newacronym{CF}{CF}{collaborative filtering}
\newacronym{CRP}{CRP}{{C}hinese restaurant process}
\newacronym{CS}{CS}{central scheduler}
\newacronym{D2D}{D2D}{device-to-device}
\newacronym{EE}{EE}{energy efficiency}
\newacronym{ICIC}{ICIC}{inter-cell interference coordination}
\newacronym{LTE}{LTE}{long term evolution}
\newacronym{MIMO}{MIMO}{multiple-input multiple-output}
\newacronym{PPP}{PPP}{{P}oisson point process}
\newacronym{SBS}{SBS}{small base station}
\newacronym{SINR}{SINR}{signal-to-interference-plus-noise ratio}
\newacronym{SCN}{SCN}{small cell network}
\newacronym{SVD}{SVD}{singular value decomposition}
\newacronym{UT}{UT}{user terminal}
\newacronym{QoS}{QoS}{quality-of-service}
\newacronym{QoE}{QoE}{quality-of-experience}
\newacronym{RAN}{RAN}{radio access network}
\newacronym{PDF}{PDF}{probability distribution function}
\newacronym{PGFL}{PGFL}{probability generating functional}
\newacronym{HetNet}{HetNet}{heterogeneous network}
\newtheorem{proposition}{Proposition}
\begin{document}
\title{Caching at the Edge: \\ a Green Perspective for $5$G Networks}
\author{
		\IEEEauthorblockN{Bhanukiran Perabathini$^{\star, \dagger}$, Ejder Baştuğ$^{\diamond}$, Marios Kountouris$^{\star, \diamond}$, Mérouane Debbah$^{\diamond}$ and Alberto Conte$^{\dagger}$}
		\IEEEauthorblockA{
				\vspace{-0.25cm}
				\\
				\small
				$^{\diamond}$Large Networks and Systems Group (LANEAS), CentraleSupélec, 91192, Gif-sur-Yvette, France \\	
				$^{\star}$Department of Telecommunications, CentraleSupélec, 91192, Gif-sur-Yvette, France \\
				$^{\dagger}$Alcatel Lucent Bell Labs, 91620, Nozay, France \\
				\{ejder.bastug, marios.kountouris, merouane.debbah\}@supelec.fr,\\ \{bhanukiran.perabathini, alberto.conte\}@alcatel-lucent.com \\
				\vspace{-0.65cm}
		}
		\thanks{This research has been supported by the ERC Starting Grant 305123 MORE (Advanced Mathematical Tools for Complex Network Engineering), the projects 4GinVitro and BESTCOM.}
}
\IEEEoverridecommandlockouts
\maketitle

\begin{abstract}
Endowed with context-awareness and proactive capabilities, caching users' content locally at the edge of the network is able to cope with increasing data traffic demand in 5G wireless networks. In this work, we focus on the energy consumption aspects of cache-enabled wireless cellular networks, specifically in terms of area power consumption (APC) and energy efficiency (EE). We assume that both base stations (BSs) and mobile users are distributed according to homogeneous Poisson point processes (PPPs) and we introduce a detailed power model that takes into account caching. We study the conditions under which the area power consumption is minimized with respect to BS transmit power, while ensuring a certain quality of service (QoS) in terms of coverage probability. Furthermore, we provide the optimal BS transmit power that maximizes the area spectral efficiency per unit total power spent. The main takeaway of this paper is that caching seems to be an energy efficient solution.
\end{abstract}
\begin{keywords}
5G, mobile wireless networks, caching, energy efficiency, area power consumption, stochastic geometry 
\end{keywords} 
\section{Introduction}
\label{sec:introduction}
Fueled by the ubiquity of new wireless devices and smart phones, as well as the proliferation of bandwidth-intensive applications, user demand for wireless data traffic has been growing tremendously, increasing the corresponding network load in an exponential manner. This is further exacerbated by rich media applications associated with video streaming and social networking. 
In parallel, both academia and industry are now in the urge of evolving traditional cellular networks towards the next-generation broadband mobile networks, coined as 5G networks, targeting to satisfy the mobile data tsunami while minimizing expenditures and energy consumption. Among these intensive efforts, caching users' content locally at the edge of the network is considered as one of the most disruptive paradigms in 5G networks \cite{Andrews2014Will}.

Interestingly yet not surprising, recent results have shown that distributed content caching can significantly offload different parts of the network, such as in \glspl{RAN} and \ac{CN}, by smartly prefetching and storing content closer to the end-users \cite{Bastug2014LivingOnTheEdge}. Indeed, traditional cellular networks, which have been designed for mobile devices with limited processing and storage capabilities, have started incorporating context-aware and proactive capabilities, fueled by recent advancements in processing power and storage. As a result, caching has recently taken the 5G literature by storm.

The idea of caching at the edge of the network, namely at the level of \glspl{BS} and \glspl{UT}, have been highlighted in various works, including edge caching \cite{Goebbels2010Disruption}, FemtoCaching \cite{Golrezaei2012Femtocaching}, and proactive caching \cite{Bastug2013Proactive}. Spatially random distributed cache-enabled \glspl{BS} are modeled in \cite{Bastug2014CacheEnabled}, and expressions for the outage probability and average delivery rate are derived therein. Another stochastic framework but in the context of cache-enabled \ac{D2D} communications is presented in \cite{Altieri2014Fundamental}, studying performance metrics that quantify the local and global fraction of served content requests. 
Although most prior work (see \cite{Bastug2015ThinkBefore} for a recent survey) deals with different aspects of caching, i.e. performance characterization and approximate algorithms, the energy consumption behavior of cache-enabled \glspl{BS} in densely deployed scenarios has not been investigated. This is precisely the focus of this paper.

Energy consumption aspects of cellular networks are usually investigated by placing \glspl{BS} on a regular hexagonal or grid topology and conducting intensive system-level simulations \cite{Jain2008System}. Therein, the aim is to find the optimal values of parameters, such as cell range and BS transmit power, while minimizing the total power consumed under certain \ac{QoS} constraints. Despite its attractiveness, it is rather cumbersome or even impossible to analytically evaluate key performance metrics in large-scale networks. A common simplification in modeling cellular networks is to place the \glspl{BS} on a two dimensional plane according to a homogeneous spatial \ac{PPP}, enabling to handle the problem analytically \cite{Andrews2011Tractable}. Several works studied the validity of \ac{PPP} modeling of \glspl{BS} compared to regular cellular models (e.g. \cite{Chen2012Small}) and additional insights can be obtained by analytically characterizing performance metrics, such as the spatial distribution of \ac{SINR}, coverage probability, and average rate \cite{Lee2013Stochastic, Mukherjee2014Analytical}.

In this work, we analyze energy consumption aspects of cache-enabled wireless network deployments using a spatial model based on stochastic geometry, which to the best of our knowledge has not been addressed in prior works. We consider the problem of \ac{APC} and \ac{EE} of cache-enabled \glspl{BS} in a scenario where \glspl{BS} and mobile users are distributed according to independent homogeneous \glspl{PPP}.
We provide conditions under which the area power consumption is minimized subject to a certain quality of service (QoS) in terms of coverage probability. Furthermore, we provide the optimal BS transmit power that maximizes the energy efficiency, defined as the ratio of area spectral efficiency over the total power consumed. The main result of this paper is that distributed content caching in wireless cellular networks turns out to be an energy efficient solution.

\section{System Model}
\label{sec:systemmodel}
We consider the downlink scenario of a single-tier cellular network in which the \glspl{BS} are distributed on the two-dimensional Euclidean plane $\mathbb{R}^{2}$ according to a homogeneous spatial \ac{PPP} with density $\lambda_{b}$ denoted by $\Phi_{b} = \{r_{i}\}_{i \in \mathbb{N}}$, where $r_{i} \in \mathbb{R}^{2}$ is the location of the $i$-th BS. The mobile \glspl{UT} (or users) are modeled to be distributed on the same plane according to an independent homogeneous \ac{PPP} with density $\lambda_{u}$ > $\lambda_{b}$, which is denoted by $\Phi_{u} = \{s_{j}\}_{j \in \mathbb{N}}$ with $s_{j} \in \mathbb{R}^{2}$ being the location of the $j$-th user. Without loss of generality, we focus on a \emph{typical} \ac{UT} placed at the origin of the coordinate system for calculating the key performance metrics of interest.

In this system model, we assume that the signal transmitted from a given \ac{BS} is subject to two propagation phenomena before reaching a user: (i) a distance dependent pathloss governed by the pathloss function $g(r) = b r^{-\alpha}$, where $b$ is the pathloss coefficient and $\alpha$ is the pathloss exponent, and (ii) Rayleigh fading with mean 1. Therefore, the signal strength from the $i$-th \ac{BS} as received by the typical user can be expressed as 
\begin{equation}
  \label{eq:sig-strength}
  p_{i}(r_{i}) = h_{i} P b r_{i}^{-\alpha},
\end{equation}
where the random variable $h_{i}$ denotes the power of Rayleigh fading and $P$ is the transmitted power. In addition, we assume that background noise is present in the system, with variance $\sigma^{2} = \beta \lambda_{b}$, with $\beta = B \frac{1}{\lambda_{u}} \frac{F k T}{b}$, where $B$ is the total available bandwidth, $F$ is the receiver noise figure, $k$ is Boltzmann constant, and $T$ is the ambient temperature. 

As alluded earlier, the typical user connected to the $i$-th \ac{BS} receives a signal of power $p_{i}(r_i)$. This in turn means that the sum of the received powers from the rest of the \glspl{BS} contributes to the interference to this signal. As a result, the received \ac{SINR} at the typical user is given by 
\begin{equation}
  \label{eq:sinr-intro}
  \text{SINR} = \frac{h_{i} g(r_{i}) P}{\sigma^{2} + I_{i}},
\end{equation}
where $I_{i} = \sum_{r_{j} \in \Phi_{b} \setminus r_{i}} p_{j}(r_{j})$ is the cumulative interference experienced from all the  \glspl{BS} except the  $i$-th \ac{BS}.

In a downlink scenario, although the typical user may technically be served from any \ac{BS}, a connection with a particular \ac{BS} has to be established according to a association policy that satisfies a certain performance metric, e.g. ensuring \ac{QoS}. In this work, we assume that a mobile user connects to the \ac{BS} that provides the maximum \ac{SINR}. This is formally expressed as
\[
	\underset{r_i \in \Phi_{b}}{\text{max}}{\text{SINR}(i) > \gamma},
\]
where $\gamma$ is the target \ac{SINR}. Given the above definition, the typical user is said to be \textit{covered} when there is at least one \ac{BS} that offers an $\text{SINR} > \gamma$. If not, we say that the typical user is not covered. We assume that $\gamma > 1$, which is needed to ensure that there is at maximum one \ac{BS} that provides the highest \ac{SINR} for a user at a given instant \cite{Dhillon2012Modeling}. 

\subsection{Coverage and Quality of Service}
Among the main objectives of wireless networks is to guarantee a certain \ac{QoS} for the users. The choice of a specific \ac{QoS} metric influences both the complexity and the operation of the network. In this paper, we consider the coverage probability as the \ac{QoS} that has to be assured. The coverage probability for strongest \ac{BS} association for a general pathloss function $g(r)$ is given as~\cite{Dhillon2012Modeling, Samarasinghe2013OptimalSinr}
\begin{eqnarray}
  \label{eq:Pcov}
P_{\text{cov}}(P, \lambda_{b}) &=& \mathbb{P}[\text{SINR} \geq \gamma] \nonumber \\
&=& \pi \lambda_{b} \int_{0}^{\infty} \exp(-q(P, \lambda_{b}, r)) \, dr,
\end{eqnarray}
where
\begin{equation}
  \label{eq:Pcovq}
q(P, \lambda_{b}, r) = \frac{\gamma \sigma^{2}}{P g(\sqrt{r})} +\lambda_{b} \int_{0}^{\infty} \frac{\pi \gamma g(\sqrt{r_{i}}) }{g(\sqrt{r}) + \gamma g(\sqrt{r_{i}})} \, dr_{i}.
\end{equation}

Denoting by $P_{\text{cov}}^{\text{NN}}$ the coverage probability in the case of no noise ($\sigma^2 \to 0$), the expression for the coverage probability in \eqref{eq:Pcov} may be approximated in a low noise regime as \cite{Perabathini2014Optimal}
\begin{equation}
  \label{eq:pcov-approx}
  P_{\text{cov}}(P, \lambda_{b}) = P_{\text{cov}}^{\text{NN}}\left(1 - \frac{A'}{\lambda_{b}^{\frac{\alpha}{2} - 1}}\frac{1}{P}\right),
\end{equation}
where $A' = \frac{\beta
 \Gamma(1 + \frac{\alpha}{2})}{b [C(\alpha)]^{\frac{\alpha}{2}}}$ and $C(\alpha) = \frac{2\pi^{2}}{\alpha} \mathrm{csc}\left({\frac{2\pi}{\alpha}}\right)$.

The specific \ac{QoS} constraint we consider here is that the coverage probability experienced by a typical user has to be always greater than the coverage probability achieved in the absence of noise, i.e.  
\begin{equation}
  \label{eq:QoS-constraint}
  P_{\text{cov}} \geq P_{\text{cov}}^{\text{NN}}.
\end{equation}
As a consequence, it can be shown (from \eqref{eq:pcov-approx} and \eqref{eq:QoS-constraint}) that the optimal \ac{BS} density $\lambda_{b}^{*}$ that guarantees the above \ac{QoS} constraint is
\begin{equation}
  \label{eq:lam-star}
  \lambda_{b}^{*} \geq \frac{A}{P^{\frac{2}{\alpha - 2}}},
\end{equation}
with $A = A'^{\frac{2}{\alpha - 2}}$. We use this expression in the optimization problem we formulate in Section~\ref{sec:OAPC}.

Once the typical user is covered with a certain \ac{SINR}, the traffic requests of this typical user have to be satisfied from its \ac{BS}, by bringing the content from its source on the Internet via the backhaul. In practice, even though the typical user is covered and benefits from high \ac{SINR}, it is obvious that any kind of bottleneck in the backhaul may result into long delays to the content, degrading the overall quality of experience (QoE). As part of dealing with this bottleneck, we assume that the \glspl{BS} are able to store the users' (popular) content in their caches, so that requests can be satisfied locally, without passing over the limited backhaul. This is detailed in the following section.
\subsection{Cache-enabled Base Stations}
Several studies have shown that multiple users actually access the same content very frequently. Take for instance some popular TV shows, the case of viral videos with over a billion viewings, news blogs, online streaming, etc. In this context, the network will be inundated with requests for the same content that might largely increase the latency or, eventually, congest the network itself. Otherwise stated, certain types of content (or information) are relatively more popular than others and are requested/accessed more often by the users \cite{Shafiq2011Characterizing}. Therefore, it is reasonable to assume that a user's choice distribution matches with the global content popularity distribution. As mentioned before, the logic behind having cache-enabled \glspl{BS} is to exploit this likelihood and store locally at the \glspl{BS} serving a typical user the content with highest demand (popular demand) so that both users and service providers get an incentive when a popular request is made. 

Let us assume that each \ac{BS} is equipped with a storage unit (hard disk) which caches popular content. Since the storage capacity cannot be infinite, we assume that at each \ac{BS} a set of content up to $f_{0}$ (the catalog) is stored on the hard disk. Rather than caching uniformly at random, a smarter approach will be to store the most popular content according to the given global content popularity statistics. We model the content popularity distribution at a typical user to be a right continuous and monotonically decreasing \ac{PDF}, denoted as \cite{Newman2005Power}
\begin{equation}
\label{eq:contentpdf}
f_{\mathrm{pop}}\left(f,\eta\right)
	=
	\begin{cases}
	\left(\eta - 1\right)f^{-\eta},
		& f \geq 1, \\
		0,			
		& f < 1,
	\end{cases}
\end{equation}
where $f$ is a point in the support of the corresponding content, and $\eta$ represents the steepness of the popularity distribution curve. We define the steepness factor to be the (average) number of users per \ac{BS}, that is $\eta = \frac{\lambda_{u}}{\lambda_{b}}$. The justification for the above model is that the higher the number of users attached to a \ac{BS}, the more accurately the trend is sampled, hence the more content is sorted towards the left of the distribution thereby making it steeper. Moreover, since $\lambda_{u} > \lambda_{b}$, we have that $\eta > 1$.

Now, given the fact that \glspl{BS} cache the catalog according to the content popularity distribution in (\ref{eq:contentpdf}), the probability that a content demanded by a connected user falls within the range $[0, f_{0}]$ is given by
\begin{eqnarray}
  \label{eq:prob-hit}
  \mathbb{P}_{\text{hit}} &=& \int_{0}^{f_{0}} f_{\text{pop}}(f, \eta) \, df \nonumber \\
  						  &=& \int_{0}^{f_{0}} (\eta - 1) f^{-\eta} \, df \nonumber \\
                       	  &=& 1 - f_{0}^{1 - \eta}.
\end{eqnarray}
It can be verified that $\mathbb{P}_{\text{hit}}$ converges to 1 when $f_{0} \rightarrow \infty$, namely when the  catalog stored in the \glspl{BS} goes to infinity. Consequently, the probability that a request is missing from the catalog can be expressed as $\mathbb{P}_{\text{miss}} = f_{0}^{1 - \eta}$.  An illustration of the system model is given in Fig. \ref{fig:scenario}, including snapshots of \glspl{PPP} and visualization of the content popularity distribution. In the following, we introduce a power model which takes into account the caching capabilities at \glspl{BS}, and will be used for investigating the energy aspects of cache-enabled \ac{BS} deployment.
\begin{figure*}[ht!]
	\centering
	\includegraphics[width=0.95\linewidth]{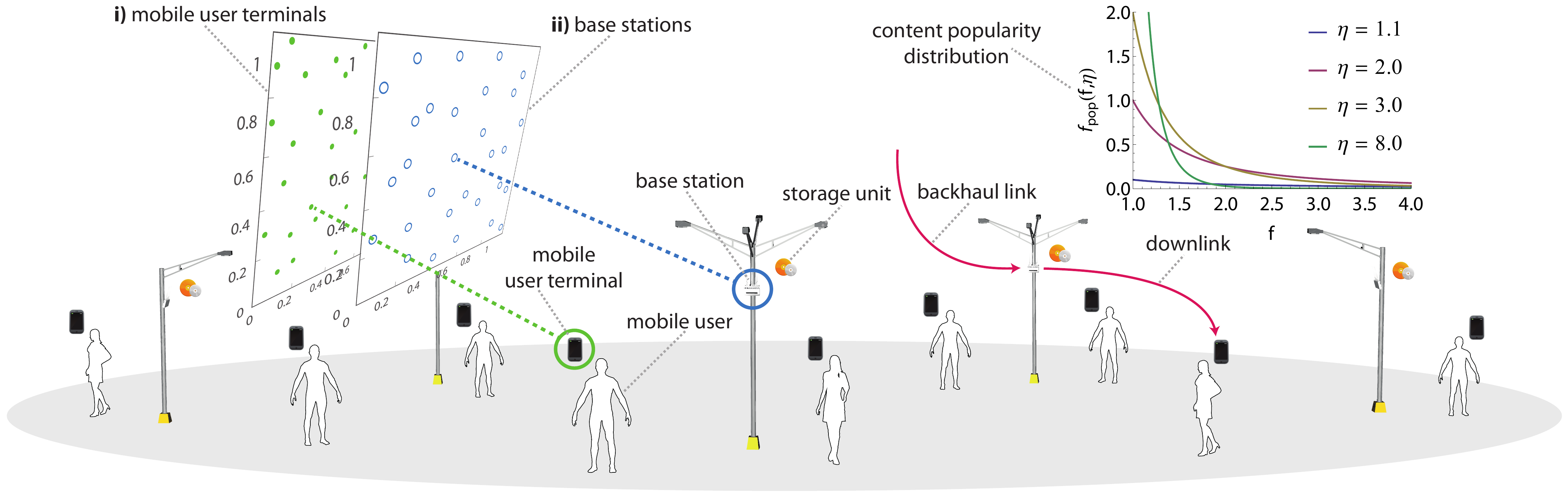}
	\caption{An illustration of the considered system model. The snapshots of PPPs for  i) mobile users and ii) base stations on unit area are given on the left side. The content popularity distribution for different values of $\eta$ is given on the top right side, showing that lower values of $\eta$ corresponds to a more uniform behavior.}
	\label{fig:scenario}
	\vspace{-0.2cm}
\end{figure*} 

\subsection{Power Consumption Model}
In the context of studying caching at the edge, there is a large scope for adopting an extensive power model that takes into account various detailed factors. In this paper, we deliberately restrict ourselves to a basic power model as a first attempt to relate energy efficiency to a cache enabled wireless network. We address a more complete model in an extended version of this paper. We consider two different power models depending on whether \glspl{BS} have caching capabilities or not.
\setcounter{subsubsection}{0}
\subsubsection{With caching}
\label{sec:power-with-cache}
The components of the total power consumed at an operating \ac{BS} is given as follows:
\begin{enumerate}
\item[i)] A constant transmit power, $P$.
\item[ii)] An operational charge at each \ac{BS}, $P_{\text{o}}$.
\item[iii)] Power needed to retrieve data from the local hard disk when a content from the catalog is requested, $P_{\text{hd}}$.
\item[iv)] Power needed to retrieve data from the backhaul when a content outside the catalog is requested, $P_{\text{bh}}$.
\end{enumerate}
We assume that $P_{\text{bh}} > P_{\text{hd}}$ motivated by the realistic constraint that it is more power consuming to utilize the backhaul connection than to retrieve stored information from the local caching/storage entity.

Therefore, the total power consumed at a given \ac{BS} is sum of all the components such as
\begin{eqnarray}
  \label{eq:P-total}
   P_{\text{tot}}^{\text{(c)}} &=& P + P_{o} + P_{\text{bh}} \times \mathbb{P}_{\text{miss}} +  P_{\text{hd}} \times \mathbb{P}_{\text{hit}} \nonumber \\
                         &=& P + P_{o} + P_{\text{hd}}  + (P_{\text{bh}} - P_{\text{hd}}) f_{0}^{1 - \eta} \nonumber \\
                         &=& P + P_{\text{s}}  + P_{\text{d}} f_{0}^{1 - \eta},
\end{eqnarray}
where $P_{\text{s}} = P_{o} + P_{\text{hd}}$ and $P_{\text{d}} =  P_{\text{bh}} - P_{\text{hd}}$.
%
\subsubsection{Without caching}
\label{sec:power-no-cache}
In the absence of caching, the \ac{BS} has to retrieve the requested content from the backhaul every service timeslot. This is equivalent to the case where $\mathbb{P}_{\text{hit}} = 0$ (or $\mathbb{P}_{\text{miss}} = 1$). The total power consumed at a given \ac{BS} is given as
\begin{eqnarray}
  \label{eq:1}
  P_{\text{tot}} &=& P + P_{o} + P_{\text{bh}} \nonumber \\
              &=& P + P_{s} + P_{\text{d}}.
\end{eqnarray}
\section{Area Power Consumption}
\label{sec:OAPC}
The power expenditure per unit area, also termed as \ac{APC}, is an important metric to characterize the deployment and operating costs of \glspl{BS}, also indicating the compatibility of the system with the legal regulations. In our system model, the \ac{APC} for cache-enabled \glspl{BS} is defined as
\begin{equation}
  \label{eq:APC-cache}
  \mathcal{P}^{\text{(c)}} = \lambda_{\text{b}} P_{\text{tot}}^{\text{(c)}}.
\end{equation}
In the same way, the \ac{APC} of \glspl{BS} with no caching capabilities is defined as $\mathcal{P} = \lambda_{\text{b}} P_{\text{tot}}$. In this section, we aim at minimizing separately the \ac{APC} for both cases (caching and no caching), while satisfying a certain \Ac{QoS}. This can be formally written as
\begin{equation}
  \label{eq:opt-prob}
  \begin{aligned}
        & \underset{P \in [0, \infty)}{\text{minimize}}&            & \mathcal{P}(P) \text{ or }  \mathcal{P}^{\text{(c)}}(P)\\
        & \text{subject to} &                              & P_{\text{cov}} (P, \lambda_{b}) \geq P_{\text{cov}}^{\text{NN}}.
  \end{aligned}
\end{equation}

Consider first the case where \glspl{BS} have caching capabilities. The following result can be obtained for the solution of the optimization problem in (\ref{eq:opt-prob}).
\begin{proposition}
   \label{prop:oapc}
   Suppose that \glspl{BS} have caching capabilities, thus $\mathcal{P}^{\text{(c)}}(P)$ is the objective (utility) function in (\ref{eq:opt-prob}). Then, for $\alpha = 4 + \epsilon$, $\epsilon > 0$, the optimal power allocation $P^{\star}$ that solves (\ref{eq:opt-prob}) is lower bounded as
 	\begin{equation}
		P^{*}  > \frac{2 P_{\text{s}}}{\epsilon}.
	\end{equation}
\end{proposition}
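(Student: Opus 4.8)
The plan is to collapse the problem to a single variable by inserting the coverage constraint, and then to extract the bound by reading off the first-order optimality condition after discarding a term whose sign is controlled.

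First I would fix $P$ and note that $\lambda_{b}\mapsto\lambda_{b}\bigl(P+P_{\text{s}}+P_{\text{d}} f_{0}^{1-\lambda_{u}/\lambda_{b}}\bigr)$ is strictly increasing on $(0,\infty)$, since its derivative equals $\bigl(P+P_{\text{s}}+P_{\text{d}} f_{0}^{1-\lambda_{u}/\lambda_{b}}\bigr) + P_{\text{d}} f_{0}^{1-\lambda_{u}/\lambda_{b}}(\lambda_{u}/\lambda_{b})\ln f_{0}$, a sum of positive terms when $f_{0}>1$. Hence, since by (\ref{eq:lam-star}) the QoS constraint forces $\lambda_{b}\ge\lambda_{b}^{*}=A\,P^{-2/(\alpha-2)}$, the optimal density is $\lambda_{b}=\lambda_{b}^{*}$. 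Substituting $\lambda_{b}=\lambda_{b}^{*}$ (so that $\eta=\lambda_{u}/\lambda_{b}^{*}=(\lambda_{u}/A)P^{2/(\alpha-2)}$) into (\ref{eq:APC-cache}) yields the reduced objective
\[
\mathcal{P}^{\text{(c)}}(P) \;=\; A\Bigl(P^{\,1-\delta} + P_{\text{s}}\,P^{-\delta} + P_{\text{d}}\,f_{0}\,P^{-\delta}\,e^{-cP^{\delta}}\Bigr),\qquad \delta:=\tfrac{2}{\alpha-2},\ \ c:=\tfrac{\lambda_{u}\ln f_{0}}{A}.
\]
For $\alpha=4+\epsilon$ one has $\delta=\tfrac{2}{2+\epsilon}\in(0,1)$ and $1-\delta=\tfrac{\epsilon}{2+\epsilon}$; this is precisely where the hypothesis $\epsilon>0$ enters.

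Next I would verify that a minimizer $P^{*}$ is interior: $\mathcal{P}^{\text{(c)}}(P)\to\infty$ as $P\to0^{+}$ because of the $P^{-\delta}$ factors, and $\mathcal{P}^{\text{(c)}}(P)\to\infty$ as $P\to\infty$ because of the leading term $AP^{1-\delta}$ (both using $\delta\in(0,1)$). Therefore $\tfrac{d}{dP}\mathcal{P}^{\text{(c)}}(P^{*})=0$. Since the caching summand $AP_{\text{d}}f_{0}\,P^{-\delta}e^{-cP^{\delta}}$ is a product of two strictly decreasing positive functions of $P$, its derivative at $P^{*}$ is strictly negative; dividing the stationarity equation by the positive factor $A(P^{*})^{-\delta}$ and moving that negative contribution to the right-hand side gives
\[
(1-\delta) - \frac{\delta\,P_{\text{s}}}{P^{*}} \;>\; 0,
\]
i.e. $P^{*}>\dfrac{\delta}{1-\delta}P_{\text{s}}=\dfrac{2}{\epsilon}P_{\text{s}}$, which is the claim.

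I expect the only delicate point to be the reduction step — being sure that driving $\lambda_{b}$ down to the threshold (\ref{eq:lam-star}) is legitimate and optimal, and that (\ref{eq:lam-star}) is the correct feasibility boundary; everything afterwards is a short calculus argument. This also makes transparent why the statement is only a lower bound and not a closed form: the term discarded from the first-order condition still depends on $f_{0}$, $\lambda_{u}$ and $P_{\text{d}}$, so the exact optimizer has no elementary expression.
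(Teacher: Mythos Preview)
Your proposal is correct and follows essentially the same approach as the paper: substitute $\lambda_{b}=\lambda_{b}^{*}$ from \eqref{eq:lam-star} into the APC, observe that for $\epsilon>0$ the reduced objective tends to infinity at both ends, and then read the bound $P^{*}>2P_{\text{s}}/\epsilon$ off the first-order condition after isolating the (strictly negative) derivative contribution of the caching term. Your write-up is in fact tighter than the paper's in two respects: you explicitly justify why the QoS constraint binds (via the monotonicity of $\lambda_{b}\mapsto\lambda_{b}P_{\text{tot}}^{(c)}$, which the paper simply assumes), and your extraction of the inequality from stationarity is cleaner than the paper's sign-tracking of the full derivative expression \eqref{eq:APC-cache-derivative}.
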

\begin{proof}
Using the expression for optimum $\lambda_{b}^{\star}$ from \eqref{eq:lam-star} and incorporating it in \eqref{eq:APC-cache}, we get the expression for \ac{APC} as
\begin{equation}
  \label{eq:op-APC-cache}
  \mathcal{P}^{(c)} = \frac{A}{P^{\frac{2}{\alpha - 2}}}(P + P_{\text{s}} + P_{\text{d}} f_{0}^{1-\frac{ \lambda_{u}}{A} P^{\frac{2}{\alpha - 2}}}).
\end{equation}

Let $\epsilon$ be a real number and write $\alpha = 4 + \epsilon$.

\begin{eqnarray}
  \label{eq:op-APC-cache2}
  \mathcal{P}^{(c)} &=& \frac{A}{P^{\frac{2}{2 + \epsilon}}}(P + P_{\text{s}} + P_{\text{d}} f_{0}^{1-\frac{ \lambda_{u}}{A} P^{\frac{2}{2 + \epsilon}}}) \nonumber \\
                   &=& A(P^{1 - \frac{2}{2 + \epsilon}} + P_{\text{s}} P^{-\frac{2}{2 + \epsilon}} + P_{\text{d}} P^{-\frac{2}{2 + \epsilon}} f_{0}^{1-\frac{ \lambda_{u}}{A} P^{\frac{2}{2 + \epsilon}}}). \nonumber \\
\end{eqnarray}

For $\epsilon \leq 0$, $\mathcal{P}^{(c)}$ is a monotonically decreasing function and no minimum point exists. However, for $\epsilon > 0$ (i.e. $\alpha > 4$), the first term in \eqref{eq:op-APC-cache2} dominates as $P \rightarrow \infty$, indicating that there exists a minimum where the derivative of the curve changes its sign from negative to positive. We set $\epsilon > 0$ for what follows.

Differentiating $\mathcal{P}^{(c)}$ with respect to $P$ we get
\begin{multline}
  \label{eq:APC-cache-derivative}
  \frac{d\,\mathcal{P}^{(c)}}{d\,P} = \frac{P^{-\frac{\epsilon +4}{\epsilon +2}}
   f_{0}^{-\frac{\lambda_{u} P^{\frac{2}{\epsilon
   +2}}}{A}}}{\epsilon +2} \times \\
A (P \epsilon -2 P_{\text{s}})
   f_{0}^{\frac{\lambda_{u} P^{\frac{2}{\epsilon
   +2}}}{A}}- 2 \lambda_{u} f_{0} P_{\text{d}} \log
   (f_{0}) P^{\frac{2}{\epsilon +2}} - 2 A f_{0} P_{\text{d}}. 
\end{multline}
Given the fact that $P$ is always positive, the derivative in \eqref{eq:APC-cache-derivative} remains negative
as $P$ increases from $0$ until it is sufficiently greater than $\frac{2 P_{\text{s}}}{\epsilon}$, after which the derivative can change its sign to positive. This indicates that there exists a minimum value for $\mathcal{P}^{(c)}$ when
\begin{equation}
  \label{eq:lower-lim}
   P^{*} > \frac{2 P_{\text{s}}}{\epsilon},
\end{equation}
which concludes the proof.
\end{proof}

For the case where \glspl{BS} have no caching capabilities, the following result is derived.
\begin{proposition}
Suppose that the \glspl{BS} have no caching capabilities, thus $\mathcal{P}(P)$ is the objective function in (\ref{eq:opt-prob}). Then, for $\alpha = 4 + \epsilon$, $\epsilon > 0$, the optimal power allocation $P^{\star}$ that solves (\ref{eq:opt-prob}) is given by 
	\begin{equation}
		 P^{*} = \frac{2(P_{\text{s}} + P_{\text{d}})}{\epsilon}.
	\end{equation}
\end{proposition}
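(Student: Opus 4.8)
The plan is to follow the same route as the proof of Proposition~\ref{prop:oapc}, but now exploiting that in the no-caching regime the per-BS power is the \emph{constant} $P_{\text{tot}} = P + P_{\text{s}} + P_{\text{d}}$ of \eqref{eq:1}, with no residual dependence on $f_0$ or $\eta$ — this is exactly what will let the stationarity condition be solved in closed form rather than merely bounded. First I would observe that for a fixed transmit power $P$ the objective $\mathcal{P} = \lambda_{b} P_{\text{tot}}$ is strictly increasing in $\lambda_{b}$, while the QoS requirement \eqref{eq:QoS-constraint} forces $\lambda_{b} \geq \lambda_{b}^{\star} = A\,P^{-\frac{2}{\alpha-2}}$ via \eqref{eq:lam-star}. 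Hence the constraint is active at the optimum and we substitute $\lambda_{b} = \lambda_{b}^{\star}$, obtaining
\begin{equation}
  \mathcal{P}(P) = \frac{A}{P^{\frac{2}{\alpha-2}}}\,(P + P_{\text{s}} + P_{\text{d}}),
\end{equation}
and, writing $\alpha = 4 + \epsilon$ so that $\alpha - 2 = 2+\epsilon$,
\begin{equation}
  \mathcal{P}(P) = A\Big(P^{\frac{\epsilon}{2+\epsilon}} + (P_{\text{s}} + P_{\text{d}})\,P^{-\frac{2}{2+\epsilon}}\Big).
\end{equation}

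Next I would settle existence and location of the minimizer for $\epsilon>0$. The function $\mathcal{P}(P)$ is smooth on $(0,\infty)$ and coercive: $\mathcal{P}(P)\to\infty$ as $P\to0^{+}$ through the term $(P_{\text{s}}+P_{\text{d}})P^{-\frac{2}{2+\epsilon}}$, and $\mathcal{P}(P)\to\infty$ as $P\to\infty$ through the term $P^{\frac{\epsilon}{2+\epsilon}}$, so a global minimum is attained at an interior stationary point; moreover $d\mathcal{P}/dP$ is negative for small $P$ and positive for large $P$. Setting $d\mathcal{P}/dP=0$ gives
\begin{equation}
  \frac{\epsilon}{2+\epsilon}\,P^{\frac{\epsilon}{2+\epsilon}-1} = \frac{2}{2+\epsilon}\,(P_{\text{s}}+P_{\text{d}})\,P^{-\frac{2}{2+\epsilon}-1}.
\end{equation}
The key simplification is that multiplying through by $(2+\epsilon)\,P^{1+\frac{2}{2+\epsilon}}$ collapses the exponents, since $\frac{\epsilon}{2+\epsilon}+\frac{2}{2+\epsilon}=1$, leaving the linear equation $\epsilon P = 2(P_{\text{s}}+P_{\text{d}})$. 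This has the unique positive root $P^{\star} = \frac{2(P_{\text{s}}+P_{\text{d}})}{\epsilon}$, which — being the only stationary point of a coercive function whose derivative changes sign from negative to positive — is the sought global minimizer.

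There is little genuine obstacle here; the one point deserving care is the contrast with the caching case. In \eqref{eq:op-APC-cache} the term $P_{\text{d}}\,f_{0}^{\,1-\lambda_{u}P^{2/(\alpha-2)}/A}$ makes $\mathcal{P}^{(c)}$ transcendental in $P$ (through $\eta=\lambda_{u}/\lambda_{b}^{\star}$), which is why Proposition~\ref{prop:oapc} yields only a lower bound; removing caching replaces that term by the constant $P_{\text{d}}$, so the stationarity condition becomes exactly solvable. I would also note that $P^{\star}>0$ is immediate from $\epsilon, P_{\text{s}}, P_{\text{d}}>0$, and that, consistently with intuition, $\frac{2(P_{\text{s}}+P_{\text{d}})}{\epsilon}$ exceeds the caching lower bound $\frac{2P_{\text{s}}}{\epsilon}$, the gap being precisely the backhaul power that caching can avoid.
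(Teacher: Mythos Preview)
Your proposal is correct and follows essentially the same route as the paper: substitute the active constraint $\lambda_{b}=\lambda_{b}^{\star}=A\,P^{-2/(\alpha-2)}$, write $\alpha=4+\epsilon$, differentiate the resulting one-variable objective, and solve $\epsilon P = 2(P_{\text{s}}+P_{\text{d}})$. Your argument is in fact slightly more careful than the paper's, which simply invokes ``similar treatment as in the proof of Proposition~\ref{prop:oapc}'' and a second-derivative check, whereas you explicitly justify activity of the constraint via monotonicity in $\lambda_{b}$ and certify global minimality through coercivity rather than local convexity.
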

\begin{proof}
In the case without caching, by similar treatment as in the proof of Proposition~\ref{prop:oapc}, we write the expression for \ac{APC} as
\begin{equation} 
  \label{eq:op-APC-nocache2}
  \mathcal{P} = \frac{A}{P^{\frac{2}{2 + \epsilon}}}(P + P_{\text{s}} + P_{\text{d}}).
\end{equation}
It can be noticed that $\mathcal{P}$ is a monotonically decreasing function and has no minimum except when $\epsilon > 0$ (or $\alpha > 4$).

Differentiating $\mathcal{P}$ with respect to $P$ we get
\begin{equation}
  \label{eq:APC-nocache-derivative}
   \frac{d\,\mathcal{P}}{d\,P} = \frac{A P^{-\frac{\epsilon +4}{\epsilon +2}} \left(-2 P_{\text{d}}-2 P_{\text{s}}+P
   \epsilon \right)}{\epsilon +2}.
\end{equation}

By equating the derivative to zero, the optimum power 
\begin{equation}
  \label{eq:P*-nocache}
  \frac{d\,\mathcal{P}}{d\,P} = 0 \Rightarrow P^{*} = \frac{2(P_{\text{s}} + P_{\text{d}})}{\epsilon}.
\end{equation}
It can easily be verified that the second derivative $\frac{d^{2}\,\mathcal{P}}{d\,P^{2}} > 0$ for ${P = P^{*}}$.
\end{proof}
\subsection{Remarks}
\label{sec:discussions}
For a given finite value of transmit power $P$, from \eqref{eq:op-APC-cache2} and \eqref{eq:op-APC-nocache2}, we observe that, in all sensible cases, 
\begin{equation}
  \label{eq:compare-APC}
  \mathcal{P}^{(c)} < \mathcal{P}.
\end{equation}
This indicates that \glspl{BS} with caching capabilities always outperform those without caching. Additionally, $\mathcal{P}^{(c)}$ can be made smaller by increasing the catalog size $f_{0}$ in \glspl{BS}. This is indeed intuitively correct, as shown in Fig.~\ref{fig:apcvsPvarf0} where some realistic power values from ~\cite{Imran2011Energy} are considered.

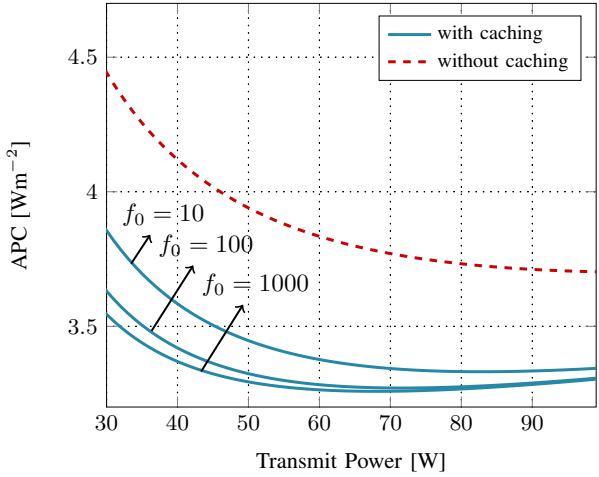
\begin{figure}[!ht]
\begin{tikzpicture}[scale=0.94]
	\begin{axis}[
 		grid = major,
 		legend cell align=left,
 		mark repeat={8},
 		ymin=3.2,ymax=4.7,
 		xmin=30,xmax=99,	
 		legend style ={legend pos=north east},
 		xlabel={Transmit Power [$\textup{W}$]},
 		ylabel={APC [$\textup{W}\textup{m}^{-2}$]}]

 		\addplot+[cyan!60!black, solid, very thick, mark=none, forget plot]
 				  table [col sep=comma] {\string"Pvarf0=10.0-withcaching-APC-vs-P-alpha-4.75.csv"};
 		\addlegendentry{with caching} 
		\node[anchor=east] (source) at (axis cs:33.900000, 3.700000){};
       	\node (destination) at (axis cs:38.00000, 3.913956){$f_0 = 10$};
       	\draw[thick,->](source)--(destination);
       	  
 		\addplot+[cyan!60!black, solid, very thick, mark=none, forget plot]
 				 table [col sep=comma] {\string"Pvarf0=100.0-withcaching-APC-vs-P-alpha-4.75.csv"};
		\node[anchor=east] (source) at (axis cs:36.800000, 3.45000){};
       	\node (destination) at (axis cs:44.000000, 3.8000){$f_0 = 100$};
       	\draw[thick,->](source)--(destination);

 		\addplot+[cyan!60!black, solid, very thick, mark=none] 
                                 table [col sep=comma] {\string"Pvarf0=1000.0-withcaching-APC-vs-P-alpha-4.75.csv"};
		\node[anchor=east] (source) at (axis cs:43.800000, 3.299364){};
       	\node (destination) at (axis cs:51.000000, 3.653956){$f_0 = 1000$};
       	\draw[thick,->](source)--(destination);
			 
		\addplot+[red!80!black, dashed, very thick, mark=none] 
                                 table [col sep=comma] {\string"Pvarf0-withoutcaching-APC-vs-P.csv"};
		\addlegendentry{without caching};
		 		  		
	\end{axis}
\end{tikzpicture}
\caption{\ac{APC} vs. Transmit power with and without caching for values: $P_{\text{s}} = 25 \textup{W}, P_{\text{d}} = 10 \textup{W}, \beta = 1$, and $\alpha = 4.75$ in \eqref{eq:op-APC-cache} and \eqref{eq:op-APC-nocache2}.}
\label{fig:apcvsPvarf0}
\vspace{-0.2cm}
\end{figure}

Fig. \ref{fig:apcvsP} illustrates the variation of \ac{APC} with respect to the transmit power in the cases with and without caching, for different values of pathloss exponent $\alpha$. It can be noticed that the \ac{APC} has a minimum for a certain power value only when $\alpha > 4$. However, \ac{APC} can be significantly reduced with caching in all cases, and the performance gap between caching and no caching cases is increased for $\alpha$ increasing. 

\begin{figure}[!ht]
\begin{tikzpicture}[scale=0.94]
	\begin{axis}[
 		grid = major,
 		legend cell align=left,
 		mark repeat={8},
 		ymin=0,ymax=40,
 		xmin=0,xmax=99,	
 		legend style ={legend pos=north east},
 		xlabel={Transmit Power [$\textup{W}$]},
 		ylabel={APC [$\textup{W}\textup{m}^{-2}$]}]

 		\addplot+[cyan!60!black, solid, very thick, mark=none] table [col sep=comma] {\string"P1-withcaching-APC-vs-P-alpha-4.00.csv"};
 		\addlegendentry{with caching} 
 		
 		\addplot+[cyan!60!black, solid, very thick, mark=none, forget plot]
 				  table [col sep=comma] {\string"P1-withcaching-APC-vs-P-alpha-5.00.csv"};
		 		  
 		\addplot+[cyan!60!black, solid, very thick, mark=none, forget plot]
 				 table [col sep=comma] {\string"P1-withcaching-APC-vs-P-alpha-6.00.csv"};
		 		   				 
		\addplot+[red!80!black, dashed, very thick, mark=none] table [col sep=comma] {\string"P1-withoutcaching-APC-vs-P-alpha-4.00.csv"};
		\addlegendentry{without caching};
		\node at (axis cs:50.000000, 0.933333) [anchor=south east] {\large $\alpha = 4$};
		 		  		
		\addplot+[red!80!black, dashed, very thick, mark=none, forget plot] table [col sep=comma] {\string"P1-withoutcaching-APC-vs-P-alpha-5.00.csv"};
		\node at (axis cs:50.000000, 5.920102) [anchor=south east] {\large $\alpha = 5$};
		
		\addplot+[red!80!black, dashed, very thick, mark=none, forget plot] table [col sep=comma] {\string"P1-withoutcaching-APC-vs-P-alpha-6.00.csv"};
		\node at (axis cs:50.000000, 15.014519) [anchor=south east] {\large $\alpha = 6$};
	\end{axis}
\end{tikzpicture}
\caption{\ac{APC} vs. Transmit power with and without caching for values: $P_{\text{s}} = 25 \textup{W}, P_{\text{d}} = 10 \textup{W}, f_{0} = 10, A = 2$ in \eqref{eq:op-APC-cache} and \eqref{eq:op-APC-nocache2}}
\label{fig:apcvsP}
\vspace{-0.2cm}
\end{figure}
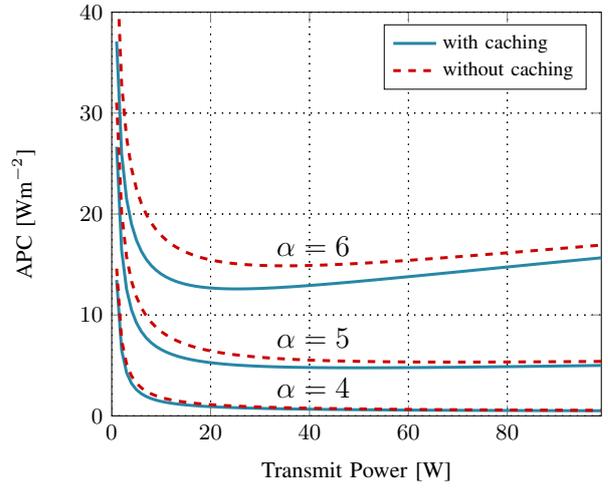


%
\section{Energy Efficiency}
\label{sec:OEE}
Another key performance metric that should be studied is the energy efficiency, which indicates the amount of utility (throughput) that is extracted out of a unit power invested for network operation. The standard \ac{QoS} factor chosen as the utility in literature is the \ac{ASE}. In \eqref{eq:EE-def}, we define the \ac{EE} as the ratio between \ac{ASE} and the total power spent at the \ac{BS}, i.e. 
\begin{equation}
  \label{eq:EE-def}
  \mathcal{E} = \frac{\lambda_{b} \log(1 + \gamma) P_{\text{cov}}(P, \lambda_{b})}{P_{\text{tot}}}.
\end{equation}
In order to define more precisely the EE metric for both cases, we use the expression for coverage probability established in \eqref{eq:pcov-approx}. With caching, the expression for \ac{EE} in \eqref{eq:EE-def} therefore becomes
\begin{equation}
  \label{eq:OEE-with-caching}
  \mathcal{E}^{(c)}(P, \lambda_{b}) = \frac{\lambda_{b} \log(1 + \gamma)  P_{\text{cov}}^{\text{NN}}(1 - \frac{A}{\lambda_{b}^{\frac{\alpha}{2} - 1}} \frac{1}{P})}{P + P_{\text{s}}  + P_{\text{d}} f_{0}^{1 - \frac{\lambda_{u}}{\lambda_{b}}}}.
\end{equation}
In the case without caching, \ac{EE} is given as
\begin{equation}
  \label{eq:OEE-without-caching}
  \mathcal{E}(P, \lambda_{b}) = \frac{\lambda_{b} \log(1 + \gamma) P_{\text{cov}}^{\text{NN}}(1 - \frac{A}{\lambda_{b}^{\frac{\alpha}{2} - 1}} \frac{1}{P})}{P + P_{\text{s}}  + P_{\text{d}}}.
\end{equation}
The following results are given for the maximization of \ac{EE} and the discussions are carried out afterwards.
\begin{proposition}
   Suppose that the \glspl{BS} have caching capabilities and let $P^{\star}_{1}$ denote the optimal power allocation that maximizes $\mathcal{E}^{(c)}(P, \lambda_{b})$. Then
   \begin{equation}
		P_{1}^{*} = 1 + \sqrt{1 + P_{\text{s}}  + P_{\text{d}} f_{0}^{1 - \frac{\lambda_{u}}{\lambda_{b}}}}.
   \end{equation}
\end{proposition}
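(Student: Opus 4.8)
The plan is to treat $\lambda_{b}$ (equivalently the steepness $\eta = \lambda_{u}/\lambda_{b}$) as fixed and regard $\mathcal{E}^{(c)}$ as a function of the single variable $P$, then locate its unique interior stationary point on $(0,\infty)$ by a first-derivative (sign-change) argument. First I would gather the quantities that do not depend on $P$: write $c = \lambda_{b}\log(1+\gamma)\,P_{\text{cov}}^{\text{NN}}$, let $a$ denote the coefficient $A\,\lambda_{b}^{1-\alpha/2}$ multiplying $1/P$ in the numerator of \eqref{eq:OEE-with-caching} (which, with the scaling used here, equals $1$), and set $C = P_{\text{s}} + P_{\text{d}} f_{0}^{1-\lambda_{u}/\lambda_{b}}$. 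With this notation \eqref{eq:OEE-with-caching} becomes
\begin{equation*}
\mathcal{E}^{(c)}(P) = c\,\frac{1 - a/P}{P + C} = c\,\frac{P - a}{P(P+C)} ,
\end{equation*}
a rational function of $P$ with $c > 0$.

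Next I would differentiate. Applying the quotient rule to $(P-a)/(P^{2}+CP)$, the sign of $\frac{d\mathcal{E}^{(c)}}{dP}$ is governed by the numerator
\begin{equation*}
(P^{2}+CP) - (P-a)(2P+C) = -P^{2} + 2aP + aC ,
\end{equation*}
so the stationary points solve the quadratic $P^{2} - 2aP - aC = 0$, whose roots are $a \pm \sqrt{a^{2} + aC}$. Only $P_{1}^{*} = a + \sqrt{a^{2} + aC}$ is positive, and since $-P^{2} + 2aP + aC$ is a downward parabola it is positive for $P < P_{1}^{*}$ and negative for $P > P_{1}^{*}$; hence $\mathcal{E}^{(c)}$ increases and then decreases and $P_{1}^{*}$ is the global maximiser on $(0,\infty)$. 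Substituting $a = 1$ and the value of $C$ gives $P_{1}^{*} = 1 + \sqrt{1 + P_{\text{s}} + P_{\text{d}} f_{0}^{1-\lambda_{u}/\lambda_{b}}}$, as claimed. One can also note $P_{1}^{*} > a$ (since $\sqrt{a^{2}+aC} > 0$), so the coverage factor $1 - a/P$, and therefore $\mathcal{E}^{(c)}$, is strictly positive at the optimum, confirming feasibility.

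The differentiation and the quadratic formula are routine; the one genuinely delicate point is the bookkeeping of the $P$-independent coefficient in front of $1/P$ in the numerator of \eqref{eq:OEE-with-caching} — one must verify that, under the normalisation adopted in this section, it is indeed $1$ (so that $a=1$), since otherwise the stationary point would read $a + \sqrt{a^{2}+aC}$ rather than the stated $1 + \sqrt{1+C}$. A secondary point worth stating explicitly is the sign/range analysis: because the derivative's numerator is a downward parabola with exactly one positive root, that root is necessarily a maximum and lies in the region $P > a$ where $\mathcal{E}^{(c)} > 0$, so no separate second-derivative check or boundary comparison is needed.
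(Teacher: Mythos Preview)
Your proof is correct and follows essentially the same route as the paper: differentiate $\mathcal{E}^{(c)}$ with respect to $P$, reduce the first-order condition to the quadratic $P^{2}-2P-C=0$ with $C=P_{\text{s}}+P_{\text{d}}f_{0}^{1-\lambda_{u}/\lambda_{b}}$, and take the positive root. The only differences are cosmetic: the paper checks the second derivative whereas you use the sign of the downward parabola, and you explicitly isolate the coefficient $a=A\lambda_{b}^{1-\alpha/2}$ and note that the stated answer requires $a=1$---an implicit normalisation the paper makes as well but does not comment on.
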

\begin{proof}
We differentiate the expression \eqref{eq:OEE-with-caching} with respect to $P$ and solve for the optimum power $P_{1}^{*}$
\begin{eqnarray}
\label{eq:opt-EE1}
\frac{d\,\mathcal{E}^{(c)}}{d\,P} = 0 &\Rightarrow& 2 P - P^{2} + P_{\text{s}}  + P_{\text{d}} f_{0}^{1 - \frac{\lambda_{u}}{\lambda_{b}}} = 0 \nonumber \\
                               &\Rightarrow& P_{1}^{*} = 1 + \sqrt{1 + P_{\text{s}}  + P_{\text{d}} f_{0}^{1 - \frac{\lambda_{u}}{\lambda_{b}}}}.
\end{eqnarray}
It can be verified that the second derivative $\frac{d^{2}\,\mathcal{E}^{(c)}}{d\,P^{2}}$ is negative for the positive solution of $P_{1}^{*}$.
\end{proof}

\begin{proposition}
   Suppose that the \glspl{BS} have no caching capabilities and let $P^{\star}_{2}$ denote the optimal power allocation that maximizes $\mathcal{E}^{(c)}(P, \lambda_{b})$. Then
   \begin{equation}
		 P_{2}^{*} = 1 + \sqrt{1 + P_{\text{s}}  + P_{\text{d}}}.
   \end{equation}
\end{proposition}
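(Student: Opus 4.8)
The plan is to repeat, almost verbatim, the argument used for the preceding proposition, since the cache-free energy efficiency in \eqref{eq:OEE-without-caching} is structurally identical to \eqref{eq:OEE-with-caching}: the only change is that the denominator term $P_{\text{d}} f_{0}^{1 - \lambda_{u}/\lambda_{b}}$ is replaced by the constant $P_{\text{d}}$. First I would fix the BS density $\lambda_{b}$, so that $P_{\text{cov}}^{\text{NN}}$, $\log(1+\gamma)$, and $A/\lambda_{b}^{\alpha/2 - 1}$ are all positive constants, and treat $\mathcal{E}(P, \lambda_{b})$ as a function of the single variable $P$ on $(0, \infty)$. Observe that the coverage factor $1 - \frac{A}{\lambda_{b}^{\alpha/2 - 1}} \frac{1}{P}$ is negative for small $P$ and positive for large $P$, so $\mathcal{E}$ is negative on an initial interval and any maximiser must lie in the low-noise regime where the approximation \eqref{eq:pcov-approx} (hence \eqref{eq:OEE-without-caching}) is meaningful.

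Next I would differentiate \eqref{eq:OEE-without-caching} with respect to $P$ via the quotient rule and set $\frac{d\,\mathcal{E}}{d\,P} = 0$. Multiplying through by the strictly positive quantity $P^{2}(P + P_{\text{s}} + P_{\text{d}})^{2}$ and dividing by the positive constant $\lambda_{b} \log(1 + \gamma) P_{\text{cov}}^{\text{NN}}$, the stationarity condition collapses to the quadratic
\begin{equation*}
  2P - P^{2} + P_{\text{s}} + P_{\text{d}} = 0,
\end{equation*}
which is exactly the cache-free analogue of the intermediate equation in \eqref{eq:opt-EE1}. Its roots are $1 \pm \sqrt{1 + P_{\text{s}} + P_{\text{d}}}$; the negative root is discarded, leaving the unique admissible stationary point $P_{2}^{*} = 1 + \sqrt{1 + P_{\text{s}} + P_{\text{d}}}$.

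Finally I would confirm that this stationary point is the maximiser, exactly as for $P_{1}^{*}$: either by checking directly that $\frac{d^{2}\,\mathcal{E}}{d\,P^{2}} < 0$ at $P = P_{2}^{*}$, or, more economically, by noting that the numerator of $\frac{d\,\mathcal{E}}{d\,P}$ equals, up to a positive factor, the downward parabola $2P - P^{2} + P_{\text{s}} + P_{\text{d}}$, which is positive on the relevant interval just past the point where the coverage factor turns positive and vanishes only once, at $P_{2}^{*}$; hence $\mathcal{E}$ increases then decreases and $P_{2}^{*}$ is the global maximum. I do not expect any genuine obstacle here: the computation is routine, and the only point deserving a line of care is to check that $P_{2}^{*}$ actually lies in the low-noise regime for which \eqref{eq:OEE-without-caching} is valid, i.e.\ that $1 - \frac{A}{\lambda_{b}^{\alpha/2 - 1}} \frac{1}{P_{2}^{*}} > 0$.
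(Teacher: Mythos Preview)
Your proposal is correct and follows essentially the same approach as the paper: differentiate \eqref{eq:OEE-without-caching} with respect to $P$, obtain the quadratic $2P - P^{2} + P_{\text{s}} + P_{\text{d}} = 0$, select the positive root, and verify it is a maximum via the second derivative. The paper's own proof is in fact just a one-line reference back to \eqref{eq:opt-EE1}, so your write-up is more detailed than needed but entirely in line with it.
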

\begin{proof}
Similar to \eqref{eq:opt-EE1}, we can show that the optimum power $P_{2}^{*} = 1 + \sqrt{1 + P_{\text{s}}  + P_{\text{d}}}$. It can be verified that the second derivative $\frac{d^{2}\,\mathcal{E}}{d\,P^{2}}$ is negative for the positive solution of $P_{2}^{*}$.
\end{proof}
\subsection{Remarks}
\label{sec:discussions-EE}
Based on the above results, we can make the following observations:
\begin{enumerate}
\item For given positive values of $\lambda_{b}$ and $P$, $\mathcal{E}^{(c)}(P, \lambda_{b})$ is always higher than $\mathcal{E}(P, \lambda_{b})$ making it apparent that implementing \glspl{BS} with caching capabilities is an energy-efficient solution.
\item For a fixed value of $\lambda_{b}$, $\mathcal{E}^{(c)}$ has a maximum at $P_{1}^{*} = 1 + \sqrt{1 + P_{\text{s}}  + P_{\text{d}} f_{0}^{1 - \frac{\lambda_{u}}{\lambda_{b}}}}$ and $\mathcal{E}(P, \lambda_{b})$ has a maximum at $P_{2}^{*} = 1 + \sqrt{1 + P_{\text{s}}  + P_{\text{d}}}$. Noting that $P_{1}^{*} < P_{2}^{*}$, we observe that the optimal \ac{EE} may be attained at a smaller value of transmit power in the case of cache-enabled \glspl{BS}.
\end{enumerate}

In Fig. \ref{fig:EEvsP} we plot the variation of \ac{EE} as a function of the transmit power $P$. It can be seen that \ac{EE} can be significantly increased (in the case of caching) by increasing the size of the catalog in \glspl{BS}, namely $f_{0}$.
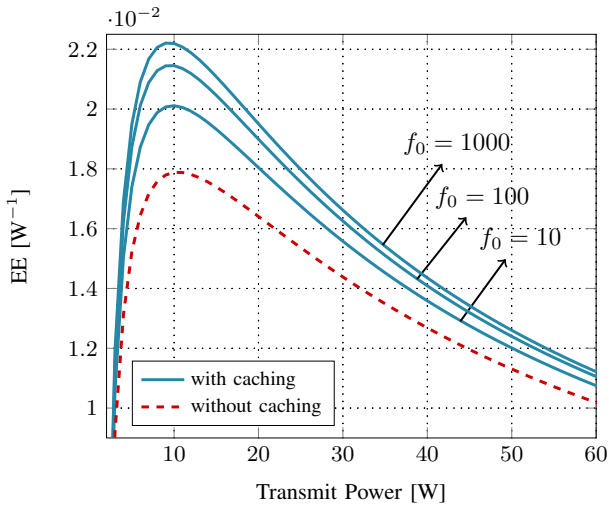
\begin{figure}[!ht]
\begin{tikzpicture}[scale=0.94]
	\begin{axis}[
 		grid = major,
 		legend cell align=left,
 		mark repeat={8},
 		ymin=0.009,ymax=0.0225,
 		xmin=2,xmax=60,	
 		legend style={at={(axis cs:5.0,0.0094)},anchor=south west},
 		xlabel={Transmit Power [$\textup{W}$]},
 		ylabel={EE [$\textup{W}^{-1}$]}]

 		\addplot+[cyan!60!black, solid, very thick, mark=none] table [col sep=comma] {\string"P2-withcaching-EE-vs-P-f0-10.csv"};
 		\addlegendentry{with caching};
		\node[anchor=north east] (source) at (axis cs:44.200000, 0.01290){};
       	\node (destination) at (axis cs:51.000000, 0.015647){$f_0 = 10$};
       	\draw[thick,->](source)--(destination);
		
 		\addplot+[cyan!60!black, solid, very thick, mark=none, forget plot] table [col sep=comma] {\string"P2-withcaching-EE-vs-P-f0-100.csv"};
		\node[anchor=north east] (source) at (axis cs:39.000000, 0.01431){};
       	\node (destination) at (axis cs:46.500000, 0.017049){$f_0 = 100$};
       	\draw[thick,->](source)--(destination);
       	
 		\addplot+[cyan!60!black, solid, very thick, mark=none, forget plot] table [col sep=comma] {\string"P2-withcaching-EE-vs-P-f0-1000.csv"}; 
		\node[anchor=north east] (source) at (axis cs:35.000000, 0.015453){};
       	\node (destination) at (axis cs:43.500000, 0.01883){$f_0 = 1000$};
       	\draw[thick,->](source)--(destination);
		
		\addplot+[red!80!black, dashed, very thick, mark=none] table [col sep=comma] {\string"P2-withoutcaching-EE-vs-P.csv"};
		\addlegendentry{without caching}
		
	\end{axis}
\end{tikzpicture}
\caption{\ac{EE} vs. transmit power with and without caching for values: $P_{\text{s}} = 25 \textup{W}, P_{\text{d}} = 10 \textup{W}, \alpha = 4.75, \beta = 1, P_{\text{cov}}^{\text{NN}} = 1, \lambda_{b} = 0.5, \lambda_{u} = 0.6, \text{ and } \gamma = 2$ in \eqref{eq:OEE-with-caching} and \eqref{eq:OEE-without-caching}.}
\label{fig:EEvsP}
\vspace{-0.2cm}
\end{figure}

\section{Conclusions}
\label{sec:confut}
In this work, we studied how incorporating caching capabilities at the BSs affects the energy consumption in wireless cellular networks. Adopting a detailed BS power model and modeling the BS locations according to a PPP, we derived expressions for the \ac{APC} and the \ac{EE}, which are further simplified in the low noise regime. A key observation of this work is that cache-enabled BSs can significantly decrease the \ac{APC} and improve the \ac{EE} as compared to traditional BSs. We also observed that the existence of an optimum power consumption point for the \ac{APC} depends on the pathloss exponent.

The energy aspects and implications of caching in wireless cellular networks, especially for 5G systems, are of practical and timely interest and clearly require further investigation. Future work may include heterogeneous network scenarios, including small cells, macro cells and WiFi access points deployment. Furthermore, storing the popular content requires accurate estimation of content popularity distribution, which cannot be easily performed in practice and may cost energy in terms of processing power. Therefore, rather than relying on this approach, randomized caching policies in a stochastic scenario \cite{Blaszczyszyn2014Geographic} can be considered as a means to provide crisp insights on the energy efficiency benefits of caching in dense wireless networks.
\bibliographystyle{IEEEtran}
\bibliography{references}

\end{document}